

\documentclass[12pt]{iopart}

\usepackage[british]{babel}      

\usepackage{amsfonts} 
\usepackage{amssymb}  
\usepackage{amsthm}   

\usepackage{comment}

\usepackage{ae}      

\usepackage{titling}


\newcommand{\R}{\mathbb{R}}
\newcommand{\C}{\mathbb{C}}

\renewcommand{\le}{\leqslant}
\renewcommand{\ge}{\geqslant}


\newcommand{\ra}{\rightarrow}

\usepackage{enumerate}    

\newtheoremstyle{example}{\topsep}{\topsep}%
     {\itshape}
     {}
     {\bfseries}
     {.}
     {\newline}
     {\thmname{#1}\thmnumber{ #2}\thmnote{ #3}}

\theoremstyle{example}
\newtheorem{theorem}{Theorem}[section]

\newtheorem{lemma}[theorem]{Lemma}
\newtheorem{corollary}[theorem]{Corollary}
\newtheorem{definition}[theorem]{Definition}

\newtheoremstyle{remark}{\topsep}{\topsep}%
     {}
     {}
     {\bfseries}
     {.}
     {\newline}
     {\thmname{#1}\thmnumber{ #2}\thmnote{ #3}}

\theoremstyle{remark}
\newtheorem{remark}[theorem]{Remark}

\newtheoremstyle{theorem*}{\topsep}{\topsep}%
      {\itshape}
      {}
      {\bfseries}
      {.}
      {\newline}
      {\thmname{#1}\thmnote{ #3}}

\theoremstyle{theorem*}
\newtheorem{theorem*}{Theorem}

\begin{document}

\title{Trace Formulae for quantum graphs with edge potentials}
\author{Ralf Rueckriemen$^1$  and Uzy Smilansky$^{1,2}$ }
\address{$^{1}$School of Mathematics, Cardiff University, Cardiff CF24 4AG, Wales, UK}
\address{$^{2}$Department of Physics of Complex Systems,Weizmann Institute of Science, Rehovot 76100, Israel.}


\begin{abstract}
This work  explores the spectra of quantum graphs where the Schr\"odinger operator on the edges is equipped with a potential. The scattering 
approach, which was originally introduced for the potential free case, is extended to this case and  used to derive a secular function whose zeros coincide with the eigenvalue spectrum. Exact trace formulas for both smooth  and  $\delta$-potentials are derived, and an asymptotic semiclassical trace formula (for smooth potentials) is presented and discussed.
\end{abstract}

\section{Preliminaries and a summary of the main results}
A metric graph is a finite combinatorial graph where each edge is endowed with the natural metric and with a positive real length. A quantum graph is a metric graph together with a Schr\"odinger operator acting on it. The operator acts as second derivative plus lower order parts on the individual edges. Boundary conditions are imposed at the vertices to ensure that the operator is self adjoint. Quantum graphs where first introduced in the 1930s by Pauling to study the movement of electrons in molecules. They are a popular model for various physical processes involving wave propagation. The survey articles \cite{GnutzmannSmilansky06}, \cite{Kuchment08} and \cite{Post09} provide an excellent overview.

One of the key properties that make quantum graphs such a useful model is the existence of an exact trace formula.
The first trace formula for quantum graphs was proved in \cite{Roth83}. It is valid for the standard Laplacian with Kirchhoff-Neumann boundary conditions at the vertices. Various generalizations have been shown, valid for Schr\"odinger operators with a magnetic field and much more general boundary conditions, (see \cite{KottosSmilansky99, KPS07, BolteEndres09} or \cite{BolteEndres08} for a survey). However, all of these rely on the fact, that the eigenvalue equation can be solved explicitly on each of the individual edges, the eigenfunctions are linear combinations of counter-propagating free waves.

Here, we consider the case of a more general Schr\"odinger operator $\Delta+w$ where $w$ stands for the set of edge potentials, as will be detailed below.  In this case the solutions on the edges are no longer known explicitly. However, Sturm-Liouville
theory still guarantees the existence of two independent solutions on each edge. In the high energy limit these solutions converge to the freely propagating  waves of the no potential setting (see for example \cite{PoeschelTrubowitz87}). Quantum graphs with a potential are also considered in \cite{GnutzmannSmilansky06}, \cite{ChernyshevShafarevich07} and \cite{HKT12}. Their methods are complementary to ours and the focus is 
on different aspects of the problem.

Before stating our main results we provide some necessary definitions and describe the model.

\subsection{The model}
\label{section:model}

Let $\mathcal{G}$ be a graph defined by a vertex set $\mathcal{V}$ and an edge set $\mathcal{E}$. Their respective cardinalities
will be denoted by $V=|\mathcal{V}|$ and $E=|\mathcal{E}|$. The natural
Euclidean measure is attributed to the edges, and the length of an edge is denoted by $L_{e}$, for $e \in \mathcal{E}$. The
number of edges and vertices as well as the length of each edge are assumed to be finite.  Distance
along an edge can be measured from either of the connected vertices, and it is convenient to use a directed edge notation to specify
the direction of increasing coordinate. Thus, given an edge $e\in \mathcal{E}$, one defines the two corresponding directed edges
denoted by $d$ and $\hat{d}$. The set of directed edges will be denoted by $ \mathcal{D} $. The initial and terminal
vertices of a directed edge $d\in  \mathcal{D} $ will be denoted by $ \iota(d)$ and $\tau(d)$, respectively. Clearly
$\iota (d)=\tau(\hat{d})$. The coordinate on the directed edge will be denoted by
$x_{d}$ and it assumes the value $0$ at $\iota(d)$ and $L_{d}$ at $\tau(d)$. Thus,
$x_{d} = L_{d}-x_{\hat{d}}$. It is  convenient to associate the length to the directed bonds, in which
case $L_{d}=L_{\hat{d}}=L_{e}$.

The Schr\"odinger operator on each directed edge is defined as the sum of the (magnetic) Laplacian and a real continuous potential $w_{d}(x)$:
\begin{equation}
\Delta_{w,A} := -\left( \frac{\partial}{\partial x_{d}} - i A_{d}\right )^2 + w_{d}(x), \ \ \ {\rm with}
\ \ d\in  \mathcal{D} \ .
\end{equation}
Here $A_{d}$ are the position independent ``magnetic potentials" associated to each edge with
$A_{d}=-A_{\hat{d}}$. For simplicity of exposition, it will be assumed that $A_d=0$ on all edges, the generalization is straight forward.
Denote the operator $\Delta_{w,0}=\Delta_w$. Clearly,  $w_{d}(x)=w_{\hat{d}}(L_e-x)$.  The edge
operator acts on the space of twice differentiable complex valued functions in the interval $[0,L_e]$.
It is natural to define the edge potentials to be continuous on all the vertices:
$w_{{d}}(0)=w_{{d'}}(0)$ for all $d,d': \iota(d)=\iota(d')$.

The graph Schr\"odinger operator acts on $2E$ dimensional vectors $\left (\psi_d(x)\right )_{d\in \mathcal{D}}$. To get a consistent
description, they are invariant under a switch of the preferred direction on the edges,  $\psi_d(x)=\psi_{\hat{d}}(L_d-x)$. The bond wave
functions are continuous at the vertices and are doubly differentiable  in $(0,L_d)$. The Schr\"odinger operator is self adjoint if
appropriate boundary conditions are chosen at the vertices. For the sake of clarity of notation, only the
Kirchhoff-Neumann boundary conditions will be used:
\begin{eqnarray}
\sum_{d\in \mathcal{D} :\ \iota(d)=v} \left. \frac{\partial \psi_d }{\partial x_d}\right|_{x_d=0} = 0 \qquad\qquad & \forall \ v\in \mathcal{V}\ .
\end{eqnarray}
The formalism can be easily extended to include other types of boundary conditions as in \cite{KostrykinSchrader99}.

It is instructive to recall the main steps taken to
derive the trace formula and point out the differences and similarities between the no potential case as done in
\cite{KottosSmilansky97} and \cite{KottosSmilansky99} and the potential case here.

On every directed edge $d$ the eigenvalue equation
\begin{eqnarray}
 -\psi_{d}''(x)+w_{d}(x)\psi_{d}(x)=k^2 \psi_{d}(x)
\end{eqnarray}
has two independent solutions, denoted by $\psi_{d}^+$ and $\psi_{d}^-$.
Normalize them so that
\begin{eqnarray}
 \psi_d^{\pm}(0)=1 &&\qquad\qquad\qquad (\psi_d^{\pm})'(0)=\mp ik
\end{eqnarray}
This corresponds to the two solutions $e^{\mp ikx}$ in the no potential setting, where the sign convention is made so that $\psi_d^+$ is the
incoming wave and $\psi_d^-$ is the outgoing wave.
The general solution is then given by
\begin{eqnarray}
\label{amplitudes}
 \psi_{d}(x)=a_{d}\psi_{d}^+(x) + b_{d}\psi_{d}^-(x)
\end{eqnarray}
Finding an eigenfunction on the entire graph is now equivalent to finding a list of coefficients $a_{d}$ and $b_d$.
These coefficients have to satisfy two types of conditions:

First, they have to be invariant under a switch of orientation on each edge. The general solution on the edge $e$
can be rewritten in terms of $\psi_{\hat{d}}^+$ and $\psi_{\hat{d}}^-$, the two solutions on the same edge with reversed orientation.
\begin{eqnarray}
 \psi_{d}(L_e-x)=\psi_{\hat{d}}(\hat{x})=a_{\hat{d}}\psi_{\hat{d}}^+(\hat{x}) + b_{\hat{d}}\psi_{\hat{d}}^-(\hat{x})
\end{eqnarray}
Here $\hat{x}$ traverses the edge in the opposite direction.

The resulting linear relation is captured in the edge transition matrix $ t^{(e)} (k;w_e)$ defined by:
\begin{eqnarray}
 \left(\begin{array}{c}a_{d}\\a_{\hat{d}} \end{array}\right)
&=   t^{(e)} (k;w_e) \left(\begin{array}{c}b_{\hat{d}}\\b_{d} \end{array}\right)
\end{eqnarray}

The matrix $t^{(e)} (k;w_e)$ is unitary, it depends on the potential on the edge and on $k$. It will be derived explicitly in section
\ref{section:single_edge}. If there is no potential the edge transition matrix describes  the  phase shift acquired by free propagation
along the edge, and $t^{(e)}(k;0)$ is a diagonal matrix with equal diagonal entries $e^{ikL_e}$.
If there is a potential, there will be both transmission and reflection on the edge, and the edge transition matrix is not necessarily diagonal.

Second, the eigenfunctions have to satisfy the boundary conditions at all the vertices, this step is equivalent to the no potential case and
is captured in the vertex scattering matrices $\sigma^{v}$.
\begin{eqnarray}
 b_{d}=\sum_{d': \tau(d')=v}\sigma^v_{dd'}a_{d'}
\end{eqnarray}
where $\iota(d)=v$.
In physical terms, $\sigma^{v}$ describes the scattering of waves at a vertex $v$ as dictated by the vertex boundary conditions.
For the Kirchhoff-Neumann boundary conditions it reads,
\begin{eqnarray}
\sigma^{v}_{dd'}=-\delta_{{\hat d}, d' }+ \frac{2}{\deg(v)},\ \ \ {\rm for}  \ \ \ \tau(d') = \iota (d) = v \
\end{eqnarray}
where $\deg(v)$ is the degree (valency) of the vertex $v$ and $\delta$ is the Kronecker-$\delta$.  Note that the scattering from an incoming edge to all outgoing edges is the same except for the scattering back to the same edge.

The vertex scattering matrices are the building blocks for construction of the $2E\times 2E$ matrix
 \begin{equation}
 \label{def Sigma}
\Sigma_{dd'}=\delta_{\tau(d'),\iota(d)}\sigma^{v}_{dd'}\ \ {\rm where} \ \ v=\iota(d)\ .
\end{equation}
 In a similar way, the graph transition matrix $T(k;w)$ is constructed from the edge transition matrices
\begin{equation}
 \label{def T}
 T_{dd'}(k;w) = (\delta_{{\hat d}, d'}+\delta_{d,d'}) t^{(e)}_{dd'}(k;w_e),\ \  {\rm where}\ \ e=(\iota(d),\tau(d))\ .
\end{equation}

\subsection{The main results}
\label{subsection:main}

Putting all the above conditions and definitions together, gives rise to the {\it secular function} $\zeta_w : \mathbb{C}\rightarrow \mathbb{C} $.
The first main result of this work states its explicit form:

\begin {theorem}
\label{theorem-secular}
Let $\mathcal{G}$ be a metric graph with the Schr\"odinger operator $\Delta_w$. Then there exists a secular function of the form
\begin{equation}
\label{secular}
\fl \qquad \zeta_w (k) = (\det S_w(k))^{-\frac{1}{2}}\det (Id_{2E}-S_w(k)),\qquad {\rm where}\qquad S_w(k)=\Sigma T(k;w)\ ,
\end{equation}
whose zeros correspond to the eigenvalues of the operator including multiplicities (with the possible exception of the eigenvalue zero).
\end {theorem}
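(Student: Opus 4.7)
The plan is to translate the eigenvalue equation into a finite-dimensional linear system on the $2E$-dimensional space of wave amplitudes $(a_d,b_d)_{d\in\mathcal{D}}$, and then to read off the zeros of $\zeta_w$ from the singular points of $Id_{2E}-S_w(k)$.

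Suppose $k\neq 0$ and let $\psi$ be an eigenfunction at energy $k^2$. On each edge the Sturm-Liouville solutions $\psi_d^+$ and $\psi_d^-$ are linearly independent (their Wronskian at $x=0$ equals $2ik$), so the expansion $\psi_d=a_d\psi_d^++b_d\psi_d^-$ determines the coefficients uniquely. By construction of $t^{(e)}$, the requirement that the two orientation-dependent parametrisations of the same edge agree is equivalent to $a=T(k;w)\,b$; by construction of $\sigma^v$, the Kirchhoff-Neumann conditions at every vertex are equivalent to $b=\Sigma\,a$. Composing gives $(Id_{2E}-S_w(k))\,b=0$, so every eigenfunction produces a non-trivial vector in $\ker(Id_{2E}-S_w(k))$. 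Conversely, given $b\in\ker(Id_{2E}-S_w(k))$, set $a:=T(k;w)\,b$ and $\psi_d:=a_d\psi_d^++b_d\psi_d^-$; reading the same two conditions backwards shows that $\psi$ lies in the domain of $\Delta_w$ and is an eigenfunction. The resulting linear map between the eigenspace at $k^2$ and $\ker(Id_{2E}-S_w(k))$ is a bijection, so dimensions and multiplicities coincide.

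It then remains to pass from $\det(Id_{2E}-S_w(k))$ to $\zeta_w(k)$. Each vertex matrix $\sigma^v$ is unitary, and the edge transition matrix $t^{(e)}$ is unitary for real $k\neq 0$ (this is shown in section \ref{section:single_edge}, and follows from the current-conservation identity $\mathrm{Im}(\psi\overline{\psi'})=\mathrm{const}$ for solutions of the Sturm-Liouville equation). Hence $S_w(k)$ is unitary, $|\det S_w(k)|=1$, and with a fixed branch the prefactor $(\det S_w(k))^{-1/2}$ is a non-vanishing unimodular number on the real axis. It neither creates nor destroys zeros and does not alter their order, so the zero set of $\zeta_w$ (together with multiplicities) coincides with that of $\det(Id_{2E}-S_w(k))$.

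The principal obstacle is the degeneracy at $k=0$: the normalisation $(\psi_d^\pm)'(0)=\mp ik$ forces $\psi_d^+=\psi_d^-$ at $k=0$, so the decomposition $\psi_d=a_d\psi_d^++b_d\psi_d^-$ collapses and the correspondence above breaks down. This is precisely the exception mentioned in the statement and would need to be addressed, if at all, by a separate limiting argument or by switching to a different fundamental system near zero. A secondary technicality is that one must check that the chosen branch of $(\det S_w(k))^{-1/2}$ does not introduce artificial poles or cancellations along the contour of interest; since the zero set of $\zeta_w$ is controlled entirely by the other factor, any consistent branch suffices for the present theorem.
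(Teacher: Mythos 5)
Your argument is essentially the paper's own proof: the edge-matching condition $a=T(k;w)b$ and the vertex condition $b=\Sigma a$ combine into $S_w(k)b=b$, so eigenvalues correspond to the values of $k$ at which $1$ is an eigenvalue of $S_w(k)$, while the unitary (hence non-vanishing) prefactor $(\det S_w(k))^{-1/2}$ cannot affect the zero set, and $k=0$ is excluded because $\psi_d^{\pm}$ degenerate there. You in fact spell out the bijection between the eigenspace of $\Delta_w$ and $\ker(Id_{2E}-S_w(k))$ in more detail than the paper, which simply asserts the correspondence.
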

Note that the secular function is of the same form as in the no potential case, see for example \cite{KottosSmilansky99}. The vertex scattering matrix $\Sigma$ is identical. The key difference is in the transition matrix $T(k;w)$. If
there is no potential this is a diagonal matrix which takes care of the  phase shift accumulated when the wave propagates from one end of an edge to the other. In the present setting it is not necessarily diagonal,
and it describes the transmission and  the reflection induced by the edge potential.

The Cauchy argument principle can be applied to count the zeros of the secular function (\ref{secular}) and to derive a trace formula. However, before this is done, the concept of periodic orbits on the graph should be re-examined.
 
In general, a periodic orbit is a closed oriented walk on the graph. It is completely determined by the topology (connectivity) of the graph. The periodic orbits on a potential free quantum graph coincide with the topological periodic orbits. However, in the presence of an edge potential, a classical particle or a wave is not only  transmitted through the edge but can also be reflected back. Thus, a periodic orbit can include back-scattering from the edge potential - a dynamically induced variant which is not accounted for by the topology.  In order to systematically include the possibility of potential induced reflections, the following scheme is proposed. It makes use of an auxiliary graph denoted by $\mathcal{G}^*$.  It is created from the original graph
$\mathcal{G}$ by inserting an auxiliary vertex (of degree 2) on each edge. The resulting graph  is bipartite. Now, the reflection  and transmission induced by the $2\times2$ transition matrices $ t^{(e)} (k;w_e)$ can be formally considered as the vertex scattering matrix of the corresponding auxiliary vertex. The set of periodic orbits in the trace formula is then the
topological set of periodic orbits on $\mathcal{G}^*$, which will be denoted by $PO^*$. This is a new feature due to the edge potentials that does not occur in the free setting. 

The trace formula is a formal equality between distributions.  To give it a meaning as a functional equality, a suitable space of test functions is required. Throughout, test functions are denoted by $\varphi$, it is assumed they are analytic on the real line, admit a holomorphic extension to a neighbourhood of the real line and are rapidly decreasing.

The second main result of the present work is given in the following theorem.
\begin{theorem}
\label{theorem-trace}
Let $\mathcal{G}$ be a metric graph with Schr\"odinger operator $\Delta_w$. Denote its positive eigenvalues
by $\{k_n^2\}_n$. Then there exists $K \ge 0$ such that
\begin{eqnarray}
\label{exact_trace}
\fl \qquad \sum_{k_n>K }^{\infty} \varphi(k_n)
=& \frac{1}{2\pi }\int_{K}^{\infty}\varphi(k)\frac{\partial \Theta_w(k)}{\partial k}dk
 +\frac{1}{\pi }Im\int_{K}^{\infty}\varphi(k)\sum_{p \in PO^*} \mathcal{A}_p(k)dk
\end{eqnarray}
for all test functions $\varphi$. The phase of  $\det T(k;w)$ is denoted by
$\Theta_w(k)$. The $\mathcal{A}_p(k)$ are complex amplitudes associated to each periodic orbit. They are build from elements of the matrix
$S_w(k)$ along the periodic orbit, the explicit formula is written down in (\ref{A-def}).
\end{theorem}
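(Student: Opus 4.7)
The plan is to derive the trace formula from Theorem \ref{theorem-secular} by applying the Cauchy argument principle to $\zeta_w(k)$ and then expanding $\log\zeta_w$ into a smooth part and a periodic-orbit series. Concretely, since the positive eigenvalues $k_n^2$ correspond to the zeros of $\zeta_w$ on $(0,\infty)$, I would choose $K>0$ so that all zeros in $(K,\infty)$ are simple and fit inside a suitable contour $\Gamma$ consisting of a segment $[K,\infty)$ of the real axis together with a return path in the upper half-plane where $\|S_w(k)\|<1$. By the argument principle,
\begin{equation*}
\sum_{k_n>K}\varphi(k_n)=\frac{1}{2\pi i}\oint_{\Gamma}\varphi(k)\frac{d}{dk}\log\zeta_w(k)\,dk.
\end{equation*}
Using the factorisation in (\ref{secular}) one has $\log\zeta_w(k)=-\tfrac{1}{2}\log\det S_w(k)+\log\det(Id_{2E}-S_w(k))$, which splits the right-hand side into two contributions.

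For the first piece, unitarity of $S_w(k)$ on the real axis gives $\log\det S_w(k)=i\Theta_w(k)$, so that after an integration by parts the boundary term on $\Gamma$ collapses onto the real axis and produces $\frac{1}{2\pi}\int_{K}^{\infty}\varphi(k)\Theta_w'(k)\,dk$, which is precisely the smooth term in (\ref{exact_trace}). For the second piece, I would push the real part of $\Gamma$ slightly into the upper half-plane (which is permitted by the analyticity and decay assumptions on $\varphi$). There the graph transition matrix $T(k;w)$ inherits the exponential decay $e^{ikL_e}$ of its free-case asymptotics, so $\|S_w(k)\|<1$ and the Neumann expansion
\begin{equation*}
\log\det(Id_{2E}-S_w(k))=-\sum_{n=1}^{\infty}\frac{1}{n}\,\mathrm{tr}\,S_w(k)^n
\end{equation*}
converges absolutely. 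Taking imaginary parts when pulling the contour back to the real axis yields the factor $\frac{1}{\pi}\,\mathrm{Im}$ in front of the integral.

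The key combinatorial step is identifying $\mathrm{tr}\,S_w(k)^n$ with a sum over periodic orbits of the auxiliary graph $\mathcal{G}^*$. Writing $S_w=\Sigma T(k;w)$ and expanding the trace as a sum of products of matrix elements indexed by sequences of directed edges, the support conditions $\Sigma_{dd'}\neq 0\Rightarrow \tau(d')=\iota(d)$ and $T_{dd'}\neq 0\Rightarrow d'\in\{d,\hat d\}$ force each non-vanishing contribution to correspond to an alternating walk through original vertices and auxiliary (midpoint) vertices, i.e.\ a closed walk on $\mathcal{G}^*$. Grouping contributions to a single orbit $p\in PO^*$ and its repetitions defines the amplitude $\mathcal{A}_p(k)$ as the product along $p$ of the relevant entries of $\Sigma$ and $T(k;w)$, and re-organises the double sum $\sum_n\frac{1}{n}\mathrm{tr}\,S_w^n$ into $\sum_{p\in PO^*}\mathcal{A}_p(k)$ exactly as in the potential-free case.

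The main obstacle is the interchange of the periodic-orbit sum with the contour integral: on the real axis $S_w(k)$ is unitary, so term-by-term convergence fails, and one must work on the shifted contour where the Neumann series converges uniformly on compact subsets. Pushing back to the real axis then requires controlling the periodic-orbit sum truncation uniformly in $k$, which is where the decay of $\varphi$ and the choice of $K$ are essential (the lower bound $K$ ensures we stay above any negative-spectrum anomalies of $\Delta_w$ and in the regime where the spectral radius of $S_w(k)$ is bounded away from $1$ on the shifted contour). Once this analytic step is justified, collecting the two contributions yields (\ref{exact_trace}).
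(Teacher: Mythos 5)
Your proposal follows essentially the same route as the paper's proof: the Cauchy argument principle applied to $\zeta_w$, the split of $\log\zeta_w$ into the $(\det S_w)^{-1/2}$ prefactor (yielding the $\Theta_w'$ term via unitarity of $S_w$ on the real axis) and $\log\det(Id_{2E}-S_w)$ (expanded as $-\sum_n \frac{1}{n}\mathrm{tr}\,S_w^n$ using the subunitarity of $S_w(k+i\varepsilon)$ for $k>K$, which is exactly Lemma \ref{subunitary}), followed by the identification of $\mathrm{tr}\,S_w^n$ with closed walks on the auxiliary bipartite graph $\mathcal{G}^*$ and the grouping into amplitudes $\mathcal{A}_p$. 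The one point to tighten is the contour: as you describe $\Gamma$, the real zeros $k_n$ lie \emph{on} it rather than inside it, so one should instead (as the paper does) take a thin strip symmetric about the real axis and use the reflection property $\zeta_w(\overline{k})=\overline{\zeta_w(k)}$ to fold the two horizontal boundaries into the single upper-half-line integral, which is precisely what produces the $\frac{1}{\pi}\,\mathrm{Im}$ factor you arrive at.
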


The threshold $K$ depends on the edge potentials and it is a necessary ingredient in the theory. Its origin will be explained     both in the formal proof of the theorem and in the example where the trace formula for $\delta$-potentials is derived.  The trace formula (\ref {exact_trace}) is exact and the coefficients $\Theta_w(k)$ and $\mathcal{A}_p(k)$ depend  implicitly on the potential. For high energies
the eigenfunctions converge to the free solutions. Thus to leading order in $k$ the trace formula above converges to the no potential case. Indeed, in the potential free case, $K=0$, $\Theta_0(k)=2k\mathcal{L}$, where $\mathcal{L}=\sum_{e \in \mathcal{E}}L_e$ and the leading order part of $\mathcal{A}_p(k)$
depends only on the vertex scattering matrix $\Sigma$ and the edge lengths. Using the WKB-approximation it is also possible to compute the first order correction term. The resulting semi-classical trace formula will be discussed in the last section of the present work.

The rest of the paper is arranged as follows.  In section \ref{section:single_edge}
the transition matrix for a single edge $t (k;w)$ will be explicitly
derived, and its relevant properties will be discussed. Using this information, the proof of the two main theorems
will follow in section
\ref {section:trace}. The construction will be illustrated in section \ref{section:illustrate} by considering a
simple system with a $\delta$-potential on the edges. Finally, the asymptotic semi-classical trace formula
will be derived in section \ref{section:high_energy}.

\section {The transition matrix for a single edge}
\label{section:single_edge}

Consider a single edge $e$ of the quantum graph. As explained in section \ref{section:model} the general solution of the eigenvalue equation
can be expressed in two different ways corresponding to the different orientations of the edge.
\begin{eqnarray}
\label{two_way_expression}
 a_{d}\psi_{d}^+(x) + b_{d}\psi_{d}^-(x)=a_{\hat{d}}\psi_{\hat{d}}^+(L-x) + b_{\hat{d}}\psi_{\hat{d}}^-(L-x)
\end{eqnarray}

Recall the definition of the edge transition matrix.

\begin{definition}
The edge transition matrix $t^{(e)} (k;w_e)$ is defined by the equation
\begin{eqnarray}
 \left(\begin{array}{c}a_{d}\\a_{\hat{d}} \end{array}\right)
&=   t^{(e)} (k;w_e) \left(\begin{array}{c}b_{\hat{d}}\\b_{d} \end{array}\right)
\end{eqnarray}
 \end{definition}

Using the symbol $'$ to denote  differentiation with respect to $x$, equation (\ref{two_way_expression}) implies
\begin{eqnarray}
 \psi_{d}(0)=\psi_{\hat{d}}(L) &&\qquad\qquad\qquad \psi_{d}'(0)=-\psi_{\hat{d}}'(L) \ .
\end{eqnarray}

A direct computation then yields an explicit expression for the transition matrix.

\begin{lemma}
\label{T-matrix}
The transition matrix is given by
\begin{eqnarray}
\fl t^{(e)} (k;w_e)=&  \frac{1}{\psi_{d}^+(0)(\psi_{\hat{d}}^+)'(L) + (\psi_{d}^+)'(0)\psi_{\hat{d}}^+(L)}  \\
& \cdot \left(\begin{array}{cc} -W(\psi_{\hat{d}}^+, \psi_{\hat{d}}^-) & -\psi_{d}^-(0)(\psi_{\hat{d}}^+)'(L) - (\psi_{d}^-)'(0)\psi_{\hat{d}}^+(L)\\
 - \psi_{d}^+(0)(\psi_{\hat{d}}^-)'(L)-(\psi_{d}^+)'(0)\psi_{\hat{d}}^-(L)    &  -W(\psi_{d}^+, \psi_{d}^-) \end{array}\right) \nonumber
\end{eqnarray}
where $W(\psi_{\hat{d}}^+, \psi_{\hat{d}}^-)=\psi_{\hat{d}}^+ (\psi_{\hat{d}}^-)'-(\psi_{\hat{d}}^+)'\psi_{\hat{d}}^-$ is the Wronskian.
\end{lemma}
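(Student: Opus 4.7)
The plan is to set up the two-way expression (\ref{two_way_expression}) for the wavefunction on the single edge, extract two scalar equations from evaluating at a common point, and then solve the resulting $2\times 2$ linear system by Cramer's rule. The denominator will appear automatically as the determinant of the coefficient matrix, and the Wronskians on the diagonal of $t^{(e)}$ will emerge from pairs of $\psi^\pm$ factors combining at the same endpoint.

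Concretely, I would first exploit the substitution $x\mapsto 0$ in the identity $\psi_d(x)=\psi_{\hat d}(L-x)$, together with its derivative counterpart $\psi_d'(0)=-\psi_{\hat d}'(L)$ already recorded above. Plugging in the representation (\ref{amplitudes}) on both sides yields the system
\begin{equation*}
\begin{pmatrix}\psi_d^+(0) & -\psi_{\hat d}^+(L)\\ (\psi_d^+)'(0) & (\psi_{\hat d}^+)'(L)\end{pmatrix}\!\begin{pmatrix}a_d\\ a_{\hat d}\end{pmatrix}
=\begin{pmatrix}-\psi_d^-(0) & \psi_{\hat d}^-(L)\\ -(\psi_d^-)'(0) & -(\psi_{\hat d}^-)'(L)\end{pmatrix}\!\begin{pmatrix}b_d\\ b_{\hat d}\end{pmatrix}.
\end{equation*}
The determinant of the matrix on the left is exactly $\psi_d^+(0)(\psi_{\hat d}^+)'(L)+(\psi_d^+)'(0)\psi_{\hat d}^+(L)$, i.e.\ the prefactor written in the statement.

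Next I would invert the left matrix explicitly and multiply by the right one. The two diagonal entries of the product involve combinations of $\psi^+$ and $\psi^-$ evaluated at the \emph{same} endpoint (either $0$ or $L$), and these telescope into the Wronskians $W(\psi_d^+,\psi_d^-)$ and $W(\psi_{\hat d}^+,\psi_{\hat d}^-)$ stated in the lemma. The off-diagonal entries mix the two endpoints and simplify only to the explicit sums of products appearing in the statement. A final column swap is needed because the statement orders the right-hand vector as $(b_{\hat d},b_d)^T$ rather than $(b_d,b_{\hat d})^T$; once that swap is performed, the formula is reproduced.

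There is no real obstacle in this proof beyond careful bookkeeping of signs (from the derivative match $\psi_d'(0)=-\psi_{\hat d}'(L)$) and of the column ordering on the right. The Wronskians on the diagonal could be further simplified: using the normalizations $\psi_d^\pm(0)=1$, $(\psi_d^\pm)'(0)=\mp ik$, one obtains $W(\psi_d^+,\psi_d^-)(0)=2ik$, and by constancy of the Wronskian the same holds at every point; this remark, however, is not needed for establishing the stated identity and can be deferred to the subsequent discussion of properties of $t^{(e)}(k;w_e)$.
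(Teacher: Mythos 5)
Your proposal is correct and coincides with the paper's own (unwritten) argument: the paper simply asserts that "a direct computation" from the matching conditions $\psi_d(0)=\psi_{\hat d}(L)$, $\psi_d'(0)=-\psi_{\hat d}'(L)$ yields the formula, and your linear system, its determinant, the Wronskian identifications on the diagonal, and the final column swap to match the $(b_{\hat d},b_d)^T$ ordering are exactly that computation carried out in full.
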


In order to derive the trace formula the matrix $t^{(e)} (k;w_e)$ needs to be unitary on the real line. This is achieved through
the choice of a suitable normalization of $\psi_d^+$ and $\psi_d^-$. The matrix $t^{(e)} (k;w_e)$ is unitary on the real line whenever
\begin{eqnarray}
 \psi_d^+(0)= \overline{\psi_d^-(0)}
\end{eqnarray}
which is satisfied with the normalization $\psi_d^{\pm}(0)=1$ and $(\psi_d^{\pm})'(0)=\mp ik$.
Note that the potential is real, so the system has two real solutions, hence $\psi_d^+(x)= \overline{\psi_d^-(x)}$ for all points
on the interval. It also implies that $W(\psi_{d}^+, \psi_{d}^-) \in i\R$ and $|\psi_d^+|=|\psi_d^-|$.

The eigenvalues of $t^{(e)} (k;w_e)$ are
\begin{eqnarray}
 \mu_{1,2}&=-\frac{W(\psi_{\hat{d}}^+, \psi_{\hat{d}}^-)\pm |\psi_{d}^-(0)(\psi_{\hat{d}}^+)'(L) + (\psi_{d}^-)'(0)\psi_{\hat{d}}^+(L)|}{\psi_{d}^+(0)(\psi_{\hat{d}}^+)'(L) + (\psi_{d}^+)'(0)\psi_{\hat{d}}^+(L)}
\end{eqnarray}
For the derivation of the trace formula it is necessary to have control of the behaviour of  $t^{(e)} (k;w_e)$ in the vicinity of the real $k$ axis. This is done in using the following lemma. 
\begin{lemma}
\label{subunitary}
 For any given potential $w_e \in L^2([0,L])$ there exist a $K$ such that the matrix $t^{(e)} (k+i\varepsilon;w_e)$ is sub-unitary for
$k > K$ and $\varepsilon >0$.

For high energies the matrix $t^{(e)} (k;w_e)$ converges to the no potential case.
\begin{eqnarray}
 t^{(e)} (k;w_e) = \left(\begin{array}{cc} e^{ikL} & 0 \\ 0 & e^{ikL} \end{array}\right) + O(k^{-1})
\end{eqnarray}

\end{lemma}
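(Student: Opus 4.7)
The plan has two parts, one for each assertion.

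I would first treat the high-energy asymptotic. Writing $\psi_d^{\pm}(x) = e^{\mp ikx} u_d^{\pm}(x)$ converts the eigenvalue equation into an ODE for $u_d^{\pm}$ with the free part absorbed. Imposing the normalization gives $u_d^{\pm}(0) = 1$, $(u_d^{\pm})'(0) = 0$, and the resulting ODE is equivalent to the Volterra integral equation
\begin{equation*}
u_d^{+}(x) = 1 + \frac{1}{2ik}\int_0^x \bigl(1 - e^{2ik(y-x)}\bigr)\, w_d(y)\, u_d^{+}(y)\, dy
\end{equation*}
and its analogue for $u_d^{-}$. In the closed upper half $k$-plane the exponential factor is bounded in modulus by $1$, so Neumann iteration together with Gronwall's inequality gives $u_d^{\pm}(x) = 1 + O(1/k)$ and $(u_d^{\pm})'(x) = O(1/k)$ uniformly on $[0,L]$ as $|k|\to\infty$. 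Inserting these expansions into the explicit formula of Lemma \ref{T-matrix}, and using that the Wronskian $W(\psi_d^{+},\psi_d^{-})\equiv 2ik$ is constant, the denominator evaluates to $-2ik\,e^{-ikL}(1+O(1/k))$, the diagonal numerators to $-2ik$, and the off-diagonal numerators exhibit the cancellation $ike^{-ikL}(1+O(1/k))-ike^{-ikL}(1+O(1/k))=O(e^{-ikL})$. Taking the ratios yields $t^{(e)}_{11}=t^{(e)}_{22}=e^{ikL}+O(1/k)$ and $t^{(e)}_{12}=t^{(e)}_{21}=O(1/k)$, proving the second assertion.

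For the sub-unitary claim I would run the same asymptotic analysis uniformly for $k$ in the closed upper half plane, since the Volterra kernel has been arranged to stay bounded there. This produces
\begin{equation*}
t^{(e)}(k+i\varepsilon;w_e) = e^{i(k+i\varepsilon)L}\, I + E(k,\varepsilon), \qquad \|E(k,\varepsilon)\|_{\mathrm{op}} = O\bigl(1/|k+i\varepsilon|\bigr),
\end{equation*}
uniformly for $\varepsilon \ge 0$. Because $|e^{i(k+i\varepsilon)L}| = e^{-\varepsilon L}\le 1$, the triangle inequality gives $\|t^{(e)}(k+i\varepsilon;w_e)\|_{\mathrm{op}} \le e^{-\varepsilon L} + C/k$. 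For every fixed $\varepsilon_0>0$ this is already $\le 1$ once $k$ is large enough, so the delicate regime is $\varepsilon \downarrow 0$.

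The main obstacle is this regime: naively $1-e^{-\varepsilon L} \sim \varepsilon L$ is too small to absorb the $O(1/k)$ error and give a threshold $K$ independent of $\varepsilon$. The way around is to exploit the unitarity of $t^{(e)}(k;w_e)$ on the real axis: writing $t^{(e)}(k;w_e) = e^{ikL}(I + M(k)/k + \cdots)$, the identity $t^{(e)*}t^{(e)}=I$ for $k\in\mathbb{R}$ forces $M(k) + M(k)^*=0$ to leading order, i.e. the correction is anti-Hermitian. Continuing analytically to $k+i\varepsilon$ and extracting the scalar factor $e^{-\varepsilon L}$, the residual matrix factor is within $O(1/k^2)$ of a unitary, yielding the refined bound $\|t^{(e)}(k+i\varepsilon;w_e)\|_{\mathrm{op}} \le e^{-\varepsilon L}\bigl(1+O(1/k^2)\bigr)$ which can be made $\le 1$ for all $\varepsilon>0$ by choosing $K$ large enough in terms of $\|w_e\|_{L^2}$ and $L$.
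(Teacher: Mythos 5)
Your high-energy asymptotic is fine and follows essentially the paper's route: the paper quotes the Sturm--Liouville bounds $|\psi^{\pm}(x)-e^{\mp ikx}|<C/k$ and $|(\psi^{\pm})'(x)\mp ike^{\mp ikx}|<C$ from P\"oschel--Trubowitz, which are exactly what your Volterra iteration produces, and then substitutes into the explicit transition matrix of Lemma \ref{T-matrix} as you do.

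The gap is in the sub-unitarity argument. You correctly isolate the delicate regime $\varepsilon\downarrow 0$, but the proposed resolution does not close it. A bound of the form $\|t^{(e)}(k+i\varepsilon;w_e)\|\le e^{-\varepsilon L}\bigl(1+O(1/k^2)\bigr)$, with the implied constant uniform in $\varepsilon$, still exceeds $1$ whenever $0<\varepsilon\lesssim 1/(Lk^2)$, since then $e^{-\varepsilon L}(1+c/k^2)>1$ no matter how large $K$ is; any estimate that is only accurate to $O(1/k^2)$ at $\varepsilon=0$ (where the true norm is exactly $1$, by unitarity) is structurally incapable of certifying strict sub-unitarity for $\varepsilon$ below that error scale. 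Moreover, the anti-Hermiticity of the leading correction $M(k)$ holds only for real $k$; after continuation the extra term is $O(\varepsilon/k)$, not $O(1/k^2)$, so even the stated bound is not justified uniformly in $\varepsilon$. What is actually needed --- and what the paper does --- is to exploit exact unitarity at $\varepsilon=0$ and differentiate in $\varepsilon$: the paper computes the eigenvalues $\mu_{1,2}(k)=e^{ikL}+\tilde E_{1,2}(k)$ explicitly (possible because the matrix has equal diagonal entries), proves both $|\tilde E_{1,2}(k)|<\tilde C/k$ and $|\partial_k\tilde E_{1,2}(k)|<\tilde C/k$, and concludes via holomorphy that $\partial_\varepsilon|\mu_{1,2}(k+i\varepsilon)|^2\big|_{\varepsilon=0}=-2L+O(k^{-1})<0$ for $k$ large. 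This step requires bounds on the $k$-derivatives of the error terms $E(x,k)$, $E'(x,k)$, which your proposal never establishes (your Volterra setup would yield them by differentiating the integral equation in $k$, but you do not invoke this). Note also that the paper controls eigenvalues, i.e.\ the spectral radius, which is what the later expansion of $\ln\det(Id_{2E}-S_w)$ actually uses; your operator-norm bound would also suffice, but only once the $\varepsilon$-derivative argument is in place.
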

\begin{proof}
Take $k$  real. From standard properties of Sturm-Liouville problems, \cite{PoeschelTrubowitz87}, the two solutions $\psi^{\pm}$ satisfy
\begin{eqnarray}
 \left| \psi^{\pm}(x) - e^{\mp ikx}\right| &< \frac{2}{k}e^{||w||_{L^2}\sqrt{L}} \nonumber \\
 \left| (\psi^{\pm})'(x) \mp ike^{\mp ikx}\right| &< 2||w||_{L^2}e^{||w||_{L^2}\sqrt{L}}
\end{eqnarray}
and
\begin{eqnarray}
 \frac{\partial \psi^{\pm}}{\partial k}(x)
=& -i\int_0^x \psi^{\pm}(\tilde{x})\left(\psi^+(x)\psi^-(\tilde{x})-\psi^+(\tilde{x})\psi^-(x)\right)d\tilde{x} \nonumber \\
& \mp \frac{1}{2k}\left(\psi^+(x)-\psi^-(x)\right)\nonumber \\
 \frac{\partial (\psi^{\pm})'}{\partial k}(x)
=& -i\int_0^x \psi^{\pm}(\tilde{x})\left((\psi^+)'(x)\psi^-(\tilde{x})-\psi^+(\tilde{x})(\psi^-)'(x)\right)d\tilde{x} \nonumber \\
 &\mp \frac{1}{2k}\left((\psi^+)'(x)-(\psi^-)'(x)\right)
\end{eqnarray}
 Define an error function $E(x,k)$ via
\begin{eqnarray}
\label{error_function}
 \psi^{+}(x)= e^{- ikx}+ E(x,k) &&\qquad\qquad\qquad
\psi^{-}(x)= e^{ikx}+ \overline{E(x,k)}
\end{eqnarray}
then the bounds above imply
\begin{eqnarray}
 \left| E(x,k) \right| < \frac{C}{k} &\qquad\qquad
\left| E'(x,k) \right| < C \nonumber \\
\left| \frac{\partial}{\partial k}E(x,k) \right| < \frac{C}{k} &\qquad\qquad
\left| \frac{\partial}{\partial k}E'(x,k) \right| < C
\end{eqnarray}
for some constant $C$ depending on the potential but independent of $k$ and $x \in [0,L]$.
Plugging (\ref{error_function}) in the transition matrix yields
\begin{eqnarray}
\fl \qquad t^{(e)} (k;w_e)=&  \frac{1}{e^{-ikL} - \frac{1}{2ik}\left( E'(L,k)-ikE(L,k)\right)} \nonumber  \\
& \cdot \left(\begin{array}{cc} 1 & \frac{1}{2ik}\left(E'(L,k)+ikE(L,k)\right)\\
 \frac{1}{2ik}\left(\overline{E'(L,k)} - ik\overline{E(L,k)}\right)    &  1 \end{array}\right)
\end{eqnarray}
This implies
\begin{eqnarray}
 t^{(e)} (k;w_e) = \left(\begin{array}{cc} e^{ikL} & 0 \\ 0 & e^{ikL} \end{array}\right) + O(k^{-1})
\end{eqnarray}
as claimed. The eigenvalues of $t^{(e)} (k;w_e)$ are given by
\begin{eqnarray}
\fl \qquad \mu_{1,2}(k) &= \frac{1\pm \frac{1}{2ik}\left|E'(L,k)+ikE(L,k)\right|}{e^{-ikL}- \frac{1}{2ik}\left( E'(L,k)-ikE(L,k)\right)} \nonumber  \\
&= e^{ikL} + \frac{1}{2ik}\frac{e^{ikL}\left(E'(L,k)-ikE(L,k)\right)\pm \left|E'(L,k)+ikE(L,k)\right|}{e^{-ikL}- \frac{1}{2ik}\left( E'(L,k)-ikE(L,k)\right)}
\end{eqnarray}
Let
\begin{eqnarray}
\tilde{E}_{1,2}(k):=\frac{1}{2ik}\frac{e^{ikL}\left(E'(L,k)-ikE(L,k)\right)\pm \left|E'(L,k)+ikE(L,k)\right|}{e^{-ikL}- \frac{1}{2ik}\left( E'(L,k)-ikE(L,k)\right)}
\end{eqnarray}
then
\begin{eqnarray}
 \left| \tilde{E}_{1,2}(k) \right| < \frac{\tilde{C}}{k}
&&\qquad\qquad\qquad  \left| \frac{\partial}{\partial k}\tilde{E}_{1,2}(k) \right| < \frac{\tilde{C}}{k}
\end{eqnarray}
from the bounds on $E(L,k)$ and $E'(L,k)$. Here $\tilde{C}$ can be computed from $C$, it also depends on the potential but not on $k$ or $x$.

Thus
\begin{eqnarray}
 \left.\frac{\partial }{\partial \varepsilon}\right|_{\varepsilon = 0}|\mu_{1,2}(k+i\varepsilon)|^2 =-2L + O(k^{-1})
\end{eqnarray}
which is smaller then zero for large enough values of $k$.
\end{proof}

\begin{remark}
This lemma still holds in the case of a $\delta$-potential as will be shown in section \ref{section:illustrate}. While one could use
this lemma to derive an explicit bound for $K$ the resulting bound would be very crude.

In the case of no potential or a positive $\delta$-potential one can choose $K=0$. For a negative $\delta$-potential
the transition matrix $t^{(e)}(k;w_e)$ is not subunitary above the real axis for small values of $k$ and an explicit value for the optimal
value of $K$ will be computed.
\end{remark}

\section {The secular equation and trace formula for graphs with edge potentials}
\label{section:trace}

We will now derive the secular function $\zeta_w$ for a quantum graph with potential. Each edge will be treated as a scatterer similar to
the vertices, using the transition matrix derived in the previous section.
The matrix $\Sigma$ controls the scattering at the vertices of the graph, and it is not affected by the introduction of the potential.
On all the vertices Kirchhoff-Neumann boundary conditions are imposed.


Using the $2E \times 2E$ matrices  $\Sigma$ and $T(k;w)$
  (\ref{def Sigma},\ref{def T}) we recall
\begin{eqnarray}
 S_w(k):=\Sigma \cdot T(k;w) \ .
\end{eqnarray}
Now the conditions on the coefficients $a_d$ and $b_d$ can be written as
\begin{eqnarray}
 S_w(k) \left( \begin{array}{c} b_1 \\ \vdots \\ b_{D} \end{array}\right)  = \left(\begin{array}{c} b_1 \\ \vdots \\ b_{D} \end{array}\right)
\end{eqnarray}
In other words, the eigenvalues of $\Delta_w$ on the graph correspond to the values of $k$ where $S_w(k)$ has eigenvalue $1$, including multiplicities.
The eigenvector defines the eigenfunction in terms of the $\psi^{\pm}$. This argument does not hold for $k=0$ as the functions $\psi^{\pm}$
are not linearly independent with our normalization.

Note that $S_w(k)$ is unitary on the real axis and subunitary above the real axis for large enough values of $k$ by
lemma \ref{subunitary} and the fact that $\Sigma$ is unitary and independent of $k$.

To prove theorem \ref {theorem-secular} recall  the definition (\ref {secular}) of the secular function
\begin{eqnarray}
 \zeta_w (k) := (\det S_w(k))^{-\frac{1}{2}}\det (Id_{2E}- S_w(k)) \ .
\end{eqnarray}
This function is zero whenever $S_w(k)$ has eigenvalue $1$, the prefactor is always nonzero because $S_w(k)$ is unitary. It is inserted
to make $\zeta_w(k)$ real on the real axis.


To prove theorem \ref {theorem-trace} one observes first that the function $\zeta_w$ can be extended holomorphically in $k$ into a
neighbourhood of the real line because
the end values $\psi^{\pm}(L)$ depend holomorphically on $k$, \cite{PoeschelTrubowitz87}. Moreover,
by the Schwarz reflection principle, the function $\zeta_w(k)$ satisfies
\begin{eqnarray}
 \zeta_w\left(\overline{k}\right)=\overline{\zeta_w(k)} .
\end{eqnarray}

The spectral density is derived using the Cauchy argument principle. Let $\{k_n^2\}_n$ denote the sequence of
positive eigenvalues. In order to expand the spectral density into a sum of periodic orbits, the matrix $S_w(k)$ needs to be subunitary above
the real axis. By lemma \ref{subunitary} this is only true for $k$ sufficiently large, therefore only eigenvalues
larger than some constant $K$ are counted. Negative eigenvalues, ie imaginary values of $k$ are also ignored.  Assume $K^2$ is not an eigenvalue. Let
\begin{eqnarray}
C_{\varepsilon, K}:= \left\{ z\in \C \mid -\varepsilon \le Im(z) \le \varepsilon, Re(z)> K \right\}
\end{eqnarray}
and let $\varphi$ be a test function, then
\begin{eqnarray}
\fl \qquad \sum_{k_n>K }^{\infty}\varphi(k_n)
&= \frac{1}{2\pi i}\lim_{\varepsilon \ra 0}\int_{\partial C_{\varepsilon,K}}\varphi(k)\frac{\partial}{\partial k}\ln (\zeta_w(k))dk \nonumber \\
&= -\frac{1}{\pi }\lim_{\varepsilon \ra 0}Im\int_{K}^{\infty}\varphi(k+i\varepsilon)\frac{\partial}{\partial k}\ln (\zeta_w(k+i\varepsilon))dk
\end{eqnarray}
where we used the Schwartz reflection principle. The integral over the interval $(-\varepsilon, \varepsilon)$ vanishes
in the limit $\varepsilon \ra 0$ because $K^2$ is not an eigenvalue.

Plugging in the definition of $\zeta_w(k)$ gives
\begin{eqnarray}
\fl \qquad \sum_{k_n>K }^{\infty}\varphi(k_n)
=& \frac{1}{2\pi }Im\int_{K}^{\infty}\varphi(k)\frac{\partial}{\partial k}\ln (\det S_w(k))dk\nonumber \\
& -\frac{1}{\pi }\lim_{\varepsilon \ra 0}Im\int_{K}^{\infty}\varphi(k+i\varepsilon)\frac{\partial}{\partial k}\ln \det (Id_{2E}- S_w(k+i\varepsilon))dk
\end{eqnarray}
In the first integral the limit $\varepsilon \ra 0$ commutes with the integral because $S_w(k)$ is unitary and thus the function inside
the integral does not have any poles on the real axis.

Let $\Theta_w(k) = \frac{1}{i} \ln \det T(k;w)$, then $\Theta_w(k)$ is real on the real axis
because $T(k;w)$ is unitary. Then
\begin{eqnarray}
 \frac{\partial}{\partial k}\ln (\det S_w(k)) =  i\frac{\partial \Theta_w(k)}{\partial k}
\end{eqnarray}
because $\Sigma$ is $k$-independent.

If $K$ is sufficiently large, by lemma \ref{subunitary}, the matrix $S_w$ is subunitary above the real axis, so it admits the expansion
\begin{eqnarray}
 \ln (\det (Id_{2E}- S_w(k + i\varepsilon))) = - \sum_{n=1}^{\infty}\frac{1}{n}\tr( S_w^n(k+i\varepsilon))
\end{eqnarray}
This gives
\begin{eqnarray}
\fl \qquad \sum_{k_n>K }^{\infty}\varphi(k_n)
=& \frac{1}{2\pi }\int_{K}^{\infty}\varphi(k) \frac{\partial \Theta_w(k)}{\partial k}dk\nonumber \\
& +\frac{1}{\pi }\lim_{\varepsilon \ra 0}Im\int_{K}^{\infty}\varphi(k+i\varepsilon)\frac{\partial}{\partial k}\sum_{n=1}^{\infty}\frac{1}{n} \tr( S_w^{n}(k+i\varepsilon))dk
\end{eqnarray}
The $\tr(S_w^n)$ terms can be interpreted as a sum over periodic orbits by using the identity.
\begin{eqnarray}
\sum_{n=1}^{\infty} \frac{\tr (\Sigma T(k;w))^n}{n}
= \sum_{n=1}^{\infty} \frac{1}{2n}\tr \left(\begin{array}{cc} 0 & \Sigma \\ T(k;w) & 0 \end{array}\right)^{2n}
\end{eqnarray}
Each edge in the original quantum graph is seen as a scatterer with scattering matrix given by the transition matrix $t^{(e)}(k;w_e)$
corresponding to the potential on that edge. Thus the $\tr(S_w(k)^n)$ terms are counting the periodic orbits of the bipartite graph $\mathcal{G}^*$
built from the original graph $\mathcal{G}$ by inserting an extra vertex on each edge. Let $PO^*$ denote the set of periodic orbits of
$\mathcal{G}^*$, $p \in PO^*$. Let
$n_p$ be the topological length of the periodic orbit $p$, that is the number of edges it traverses. Denote by $\tilde{p}$ the primitive
periodic orbit that $p$ is a repetition of. Then
\begin{eqnarray}
  Im \frac{\partial}{\partial k}
\sum_{n=1}^{\infty}\frac{1}{n}\tr \left( S_w(k)^n\right)
= \sum_{n=1}^{\infty}\frac{1}{n}\sum_{p \in PO^*, n_p=n} n_{\tilde{p}} Im \frac{\partial}{\partial k}\prod_{d \in p}\tau_{dd'}(k)
\end{eqnarray}
Here $d'$ is the bond that follows $d$ in $p$. The coefficients $\tau_{dd'}(k)$ are of two different types. If the vertex between them is a
vertex $v$ of the original graph then $\tau_{dd'}(k)=\sigma^{(v)}_{dd'}$ is the vertex scattering coefficient at that vertex from the
$\Sigma$ matrix, (which  is independent of $k$). If the vertex between them corresponds to one of the edges $e$ of the
original graph, then $\tau_{dd'}(k)=t^{(e)}_{dd'}(k)$ from the corresponding edge transition matrix.

The factor $ n_{\tilde{p}}$ comes from the fact that each periodic orbit is counted once for each starting vertex.

Theorem \ref {theorem-trace} follows once we identify
\begin{eqnarray}
\label{A-def}
 \mathcal{A}_p(k) =\frac{n_{\tilde{p}}}{n_p}Im \frac{\partial}{\partial k}\prod_{d \in p}\tau_{dd'}(k)
\end{eqnarray}

The trace formula can also be written in distribution form
\begin{eqnarray}
\sum_{k_n>K}^{\infty}\delta_{k_n}(k) =  \frac{1}{2\pi} \frac{\partial\Theta_w(k)}{\partial k}
+ \frac{1}{\pi}\sum_{p \in PO^*} \mathcal{A}_p(k)
\end{eqnarray}

\section {Example -- edges dressed with $\delta$-potentials}
\label{section:illustrate}

To illustrate the formal results, we discuss here  a graph dressed with  $\delta$-potential on the edges. The solutions $\psi^{\pm}$ are explicit so one can write down an explicit trace formula. This example shows clearly how an edge with a potential acts as a scatterer. Note that a $\delta$-potential can also be modelled by
introducing a vertex with suitable boundary conditions, there is an exact trace formula for these cases, see \cite{KottosSmilansky99} or
\cite{BolteEndres09}. The results here recover the ones in these two papers.

First consider a single edge parametrized as $[0,L]$, look at
\begin{eqnarray}
 -\psi''(x) + D \delta_{x_0}(x)\psi(x)&=k^2\psi(x)
\end{eqnarray}
This implies that $\psi'$ has a jump discontinuity at $x_0$ namely
\begin{eqnarray}
  \lim_{\varepsilon \ra 0}(\psi'(x_0+\varepsilon) - \psi'(x_0-\varepsilon)) = D\psi(x_0)
\end{eqnarray}
Here $D$ is some real parameter that measures the magnitude and $x_0 \in [0,L]$ is the location of the $\delta$-potential.

Assume for now that $k$ is real, so only  positive eigenvalues are considered. The two individual solutions are then of the form
\begin{eqnarray}
  \psi^+(x) &= \cases{
            e^{-ikx} & $x<x_0$\\
	    \frac{D}{2ik}e^{-2ikx_0}e^{ikx} + (1-\frac{D}{2ik})e^{-ikx} & $x>x_0$ \\} \nonumber \\
\psi^-(x) &= \cases{
            e^{ikx} &  $x<x_0$\\
	    (1+\frac{D}{2ik})e^{ikx} -\frac{D}{2ik}e^{2ikx_0}e^{-ikx} &$ x>x_0$\\}
\end{eqnarray}
where the coefficients in the linear combination where computed from the continuity of the solutions and the jump in their derivatives.

The Wronskian is then $W(\psi^+,\psi^-)=2ik$ on the entire edge.
In terms of the edge $\hat{d}$ with reverse orientation the $\delta$-potential is situated
at $L-x_0$, so the transition matrix is
\begin{equation}
\label{transition_delta_potential}
 t(k,D\delta_{x_0}) = \frac{1}{ (1-\frac{D}{2ik})e^{-ikL} }
 \left(\begin{array}{cc} 1 & \frac{D}{2ik}e^{-2ikx_0}e^{ikL}\\
 \frac{D}{2ik}e^{2ikx_0}e^{-ikL}   &  1
 \end{array}\right)
\end{equation}
with eigenvalues
\begin{eqnarray}
 \mu_1(k) = e^{ikL} \qquad\qquad\qquad \mu_{2}(k)=\frac{2ik+ D}{ 2ik-D }e^{ikL}
\end{eqnarray}

\begin{lemma}
\label{sufficiently_large_energy}
 If the $\delta$-potential is positive, $D \ge 0$, the matrix $t(k+i\varepsilon;D\delta_{x_0})$ is subunitary for all real $k$. If $D<0$ it is only subunitary for sufficiently large values of $k$, namely for
\begin{eqnarray}
 k^2 > -\frac{D}{L}- \frac{D^2}{4}
\end{eqnarray}

\end{lemma}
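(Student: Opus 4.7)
The plan is to exploit the explicit diagonalisation of $t(k;D\delta_{x_0})$ already recorded above the statement. Since the matrix is unitary on the real axis with eigenvalues $\mu_1(k)=e^{ikL}$ and $\mu_2(k)=\frac{2ik+D}{2ik-D}e^{ikL}$, its eigenvectors form an orthonormal basis there, so sub-unitarity just above the real axis is equivalent to $|\mu_j(k+i\varepsilon)|\le 1$ for both $j$ and small $\varepsilon>0$. The first eigenvalue is trivial: $|\mu_1(k+i\varepsilon)|=e^{-\varepsilon L}\le 1$ irrespective of $D$ and $k$.

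The core computation is therefore to track $|\mu_2(k+i\varepsilon)|^2$. Substituting $k\mapsto k+i\varepsilon$ and using $|a+ib|^2=a^2+b^2$ I would write
\begin{eqnarray}
|\mu_2(k+i\varepsilon)|^2 \;=\; \frac{(D-2\varepsilon)^2+4k^2}{(D+2\varepsilon)^2+4k^2}\,e^{-2\varepsilon L}.
\end{eqnarray}
This expression equals $1$ at $\varepsilon=0$, so sub-unitarity for small $\varepsilon>0$ is governed by the sign of its first $\varepsilon$-derivative at the origin. A direct differentiation gives
\begin{eqnarray}
\left.\frac{\partial}{\partial\varepsilon}|\mu_2(k+i\varepsilon)|^2\right|_{\varepsilon=0}
\;=\; -\frac{8D}{D^2+4k^2}-2L.
\end{eqnarray}
Requiring this quantity to be strictly negative and multiplying through by $D^2+4k^2>0$ yields $4Lk^2+LD^2+4D>0$, which rearranges to the threshold $k^2>-D/L-D^2/4$ stated in the lemma.

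The two regimes then follow. When $D\ge 0$ the right-hand side $-D/L-D^2/4$ is non-positive, so the condition is automatic for every real $k$; in fact in this regime one sees directly from the explicit formula that both factors in $|\mu_2|^2$ are $\le 1$ for every $\varepsilon\ge 0$. When $D<0$ the rational factor exceeds $1$ for small $\varepsilon>0$ and must compete with the exponential decay $e^{-2\varepsilon L}$, giving the genuine threshold $k^2>-D/L-D^2/4$. The main (minor) obstacle is passing from an infinitesimal derivative statement at $\varepsilon=0$ to sub-unitarity on an actual neighbourhood $0<\varepsilon\le\varepsilon_0$; this is handled by noting that the explicit formula for $|\mu_2(k+i\varepsilon)|^2$ is real-analytic in $\varepsilon$ with value $1$ and negative derivative at $\varepsilon=0$, so strict sub-unitarity persists on a one-sided neighbourhood, which is all that the contour deformation in the proof of theorem \ref{theorem-trace} requires.
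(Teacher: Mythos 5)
Your proposal is correct and follows essentially the same route as the paper: both diagonalise $t(k;D\delta_{x_0})$ explicitly, dispose of $\mu_1$ immediately, and reduce the claim to the sign of $\left.\partial_\varepsilon|\mu_2(k+i\varepsilon)|^2\right|_{\varepsilon=0}=-\tfrac{8D}{D^2+4k^2}-2L$, which rearranges to the stated threshold. Your only additions -- writing out $|\mu_2(k+i\varepsilon)|^2$ in closed form and remarking on the passage from the derivative at $\varepsilon=0$ to a one-sided neighbourhood -- are harmless refinements of the same argument.
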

\begin{proof}
Let $\varepsilon > 0$ then
\begin{eqnarray}
 \mu_1(k+i\varepsilon) = e^{ikL}e^{-\varepsilon L}
\end{eqnarray}
so
\begin{eqnarray}
| \mu_1(k+i\varepsilon) | < | \mu_1(k)|=1
\end{eqnarray}
For the second eigenvalue
\begin{eqnarray}
 \mu_{2}(k+i\varepsilon)=\frac{2ik-2\varepsilon+ D}{ 2ik-2\varepsilon-D }e^{ikL}e^{-\varepsilon L}
\end{eqnarray}
thus
\begin{eqnarray}
 \left.\frac{d}{d \varepsilon}\left|\mu_2(k+i\varepsilon) \right|^2\right|_{\varepsilon = 0} =- \frac{8D}{D^2+4k^2}-2L
\end{eqnarray}
which means
\begin{eqnarray}
 \left|\mu_2(k+i\varepsilon) \right| < 1 &\qquad \Leftrightarrow &\qquad L > \frac{-D}{(D/2)^2+k^2}
\end{eqnarray}
This is trivially satisfied if $D \ge 0$ but not true in general if $D< 0$.
\end{proof}

\begin{remark}
 The matrix $t(k; D\delta_{x_0})$ is only subunitary above the real axis for large enough values of $k$. This is the same
situation as in lemma \ref{subunitary} for potentials in $L^2([0,L])$.
\end{remark}

The trace formula contains terms of the form $\frac{\partial}{\partial k}\ln \det t(k;D\delta_{x_0})$ which  can be explicitly computed using
\begin{eqnarray}
 \mu_1(k) = e^{ikL} &&\qquad\qquad \mu_{2}(k)=\frac{2ik+ D}{ 2ik-D }e^{ikL} =e^{-2i\arctan(\frac{D}{2k})+ikL}
\end{eqnarray}
so that
\begin{eqnarray}
 \frac{1}{i} \frac{\partial}{\partial k}\ln \det t(\delta;k)= 2L +\frac{2D}{4k^2+D^2}
\end{eqnarray}

\begin{corollary}
 The trace formula with a $\delta$-potential of strength $D_e$ on each edge reads
 \begin{eqnarray}
\sum_{k_n>K}^{\infty}\delta_{k_n}(k) = \frac{\mathcal{L}}{\pi}+ \frac{1}{\pi}\sum_{e=1}^{E}\frac{D_e}{4k^2+D_e^2}
+ \frac{1}{\pi}\sum_{p \in PO^*} \mathcal{A}_p(k)
\end{eqnarray}
with coefficients
\begin{eqnarray}
 \mathcal{A}_p(k) =\frac{n_{\tilde{p}}}{n_p}Im \frac{\partial}{\partial k}\prod_{d \in p}\tau_{dd'}(k)
\end{eqnarray}
and the $\tau_{dd'}(k)$ from the matrix $\Sigma T(k;D\delta_{x_0})$ with $t^{(e)}(k;D_e\delta_{x_0})$ computed above in equation
\ref{transition_delta_potential}. Here $K$ has
to be chosen so that the inequality in lemma \ref{sufficiently_large_energy} is satisfied on all edges.
\end{corollary}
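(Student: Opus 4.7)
The plan is to deduce this as a direct specialization of Theorem~\ref{theorem-trace} to the $\delta$-potential setting, once the hypothesis on subunitarity of the transition matrix has been verified and the smooth $\Theta_w$-term has been computed explicitly.

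First I would check that all the prerequisites of Theorem~\ref{theorem-trace} are in place for $w=\sum_e D_e\delta_{x_0}$. The only nontrivial point is the subunitarity of $t^{(e)}(k+i\varepsilon;D_e\delta_{x_0})$ for $\mathrm{Im}(k)>0$ and $k$ real part sufficiently large. This is exactly Lemma~\ref{sufficiently_large_energy}, which plays the role of Lemma~\ref{subunitary} in the $L^2$-potential setting. Taking $K$ to satisfy the inequality of Lemma~\ref{sufficiently_large_energy} on every edge simultaneously (namely $K^2 > \max_e(-D_e/L_e - D_e^2/4)$, which reduces to $K=0$ when all $D_e \ge 0$), the derivation of the secular function and the contour-integral argument in Section~\ref{section:trace} go through verbatim, giving
\begin{eqnarray*}
\sum_{k_n>K}^{\infty}\delta_{k_n}(k) = \frac{1}{2\pi}\frac{\partial \Theta_w(k)}{\partial k}+\frac{1}{\pi}\sum_{p\in PO^*}\mathcal{A}_p(k),
\end{eqnarray*}
with $\Theta_w(k)=\frac{1}{i}\ln\det T(k;w)$ and $\mathcal{A}_p$ given by (\ref{A-def}).

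Next I would compute $\Theta_w(k)$ explicitly. Since $T(k;w)$ is block-diagonal with $2{\times}2$ blocks $t^{(e)}(k;D_e\delta_{x_0})$, one has $\det T(k;w)=\prod_e \det t^{(e)}(k;D_e\delta_{x_0})$ and therefore $\Theta_w(k)=\sum_e \frac{1}{i}\ln\det t^{(e)}(k;D_e\delta_{x_0})$. Using the eigenvalues $\mu_1(k)=e^{ikL_e}$ and $\mu_2(k)=\tfrac{2ik+D_e}{2ik-D_e}e^{ikL_e}=e^{-2i\arctan(D_e/2k)+ikL_e}$ already exhibited, the formula $\tfrac{1}{i}\partial_k\ln\det t^{(e)}= 2L_e+\tfrac{2D_e}{4k^2+D_e^2}$ derived in the section leads immediately to
\begin{eqnarray*}
\frac{1}{2\pi}\frac{\partial \Theta_w(k)}{\partial k}=\frac{\mathcal{L}}{\pi}+\frac{1}{\pi}\sum_{e=1}^{E}\frac{D_e}{4k^2+D_e^2},
\end{eqnarray*}
which is precisely the smooth part in the statement.

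Finally, the periodic-orbit sum in the corollary is just the general $\sum_{p\in PO^*}\mathcal{A}_p(k)$ from Theorem~\ref{theorem-trace}, with the coefficients $\tau_{dd'}(k)$ coming either from the unchanged vertex scattering matrix $\Sigma$ or, at each auxiliary vertex of $\mathcal{G}^*$, from the explicit $\delta$-edge transition matrix in (\ref{transition_delta_potential}). No further work is needed beyond substitution. I do not anticipate a genuine obstacle here; the only piece demanding care is the bookkeeping around the threshold $K$: making sure that the $\delta$ version of the subunitarity lemma really does allow Theorem~\ref{theorem-trace} to be invoked with the same $K$, and noting that in the purely attractive case one cannot take $K=0$ because of the bound states produced by the negative $\delta$-potentials.
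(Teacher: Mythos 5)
Your proposal is correct and matches the paper's own (implicit) argument: the corollary is obtained exactly by specializing Theorem \ref{theorem-trace} using Lemma \ref{sufficiently_large_energy} in place of Lemma \ref{subunitary}, and computing $\frac{1}{i}\partial_k \ln\det t^{(e)} = 2L_e + \frac{2D_e}{4k^2+D_e^2}$ from the explicit eigenvalues, then summing over edges. No substantive differences from the paper's route.
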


\section{The high-energy limit}
\label{section:high_energy}

Consider the high-energy limit, that is, assume that $k$ is real and $k^2 >> ||w||_{L^{\infty}(G)}$. The exact solutions will be approximated by the WKB-method. This requires some estimates on the WKB approximation so the main steps in its derivation will be recalled. The goal is to find an approximate solution to
\begin{eqnarray}
 -\psi''(x) + w(x)\psi(x) = k^2\psi(x)
\end{eqnarray}
for large values of $k$. Set
\begin{eqnarray}
\label{s-definition}
 p(x):=\sqrt{k^2-w(x)} &&\qquad\qquad\qquad s(x) :=\int_0^x p(x')dx'
\end{eqnarray}
and then use the substitution
\begin{eqnarray}
 \psi(x)=: p^{-1/2}(x)\eta(s(x))
\end{eqnarray}
This is well defined as $p(x)>0$ by the assumptions. The equation then transforms to
\begin{eqnarray}
\label{wkbform}
\fl \qquad \frac{\partial^2 }{\partial s^2}\eta(s) + \eta(s) = - \left( \frac{1}{4} w''(x(s))p^{-4}(x(s))  +\frac{5}{16} w'(x(s))^2p^{-6}(x(s)) \right) \eta(s)
\end{eqnarray}
Let $\chi(x):= \frac{1}{4} w''(x)p^{-4}(x)  +\frac{5}{16} w'(x)^2p^{-6}(x)$ to simplify notation.
Equation (\ref {wkbform}) is solved iteratively by a power series $\eta(s):=\sum_{j=0} \eta_j(s)$. The two initial solutions are $\eta_0(s) := e^{\pm is}$
and the general term is the solution of
\begin{eqnarray}
 \frac{\partial^2}{\partial s^2}\eta_j(s) + \eta_j(s) = - \chi(x(s)) \eta_{j-1}(s)
\end{eqnarray}
This is a second order linear differential equation which can be solved explicitly through variation of constants.
\begin{eqnarray}
 \eta_j(s) = -e^{is}\int_0^se^{-2is'}\int_0^{s'}e^{is''}\chi(x(s''))\eta_{j-1}(s'')ds''ds'
\end{eqnarray}

This process converges if $\chi$ is sufficiently small, which is satisfied for $k$ sufficiently large.
The WKB-approximation is simply $\eta_0$, written in the $x$ variable this yields the familiar
\begin{eqnarray}
 \psi^{\pm}_{WKB}(x) := \sqrt{\frac{p(0)}{p(x)}}e^{\mp i\int_0^{x}p(x')dx'}
\end{eqnarray}
where the constant $\sqrt{p(0)}$ term was inserted for normalization.

\begin{lemma}
 On each edge the WKB-solutions differ from the exact solutions by an error term that decays like $k^{-2}$ for $k$ sufficiently large.
\begin{eqnarray}
 \left| \psi^{\pm}(x) -\psi_{WKB}^{\pm}(x) \right| &< \frac{C}{k^2} \nonumber \\
 \left| (\psi^{\pm})'(x) -(\psi_{WKB}^{\pm})'(x) \right| &< \frac{C}{k}
\end{eqnarray}
with $C$ independent of $k$ and $x \in [0,L]$.
\end{lemma}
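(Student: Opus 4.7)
The plan is to estimate $\psi^\pm - \psi^\pm_{WKB}$ by working in the $\eta$-variable introduced above, where the equation takes the form $\eta'' + \eta = -\chi\eta$ with $\chi(x) = O(k^{-4})$. First I would observe that $\psi_{WKB}^\pm(x) = p(x)^{-1/2}\eta_{WKB}^\pm(s(x))$ with $\eta_{WKB}^\pm(s) = \sqrt{p(0)}\,e^{\mp is}$, and this $\eta_{WKB}^\pm$ satisfies only the truncated equation $\eta_{WKB}'' + \eta_{WKB} = 0$ (i.e.\ without the $\chi\eta$ term). Working out the initial conditions that the prescribed data $\psi^\pm(0)=1$, $(\psi^\pm)'(0) = \mp ik$ impose on the exact $\eta^\pm$, I find $\eta^\pm(0) = \sqrt{p(0)} = \eta_{WKB}^\pm(0)$ while
\begin{equation*}
\delta := (\eta^\pm)'(0) - (\eta_{WKB}^\pm)'(0) = \mp i\,\frac{k - p(0)}{\sqrt{p(0)}} + \frac{p'(0)}{2\,p(0)^{3/2}} = O(k^{-3/2}),
\end{equation*}
since $k - p(0) = w(0)/(k+p(0)) = O(k^{-1})$ and $p'(0) = -w'(0)/(2p(0)) = O(k^{-1})$.

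Setting $v := \eta^\pm - \eta^\pm_{WKB}$, the function $v$ satisfies $v'' + v = -\chi(\eta_{WKB}^\pm + v)$ with $v(0)=0$, $v'(0)=\delta$. I would invert this using the Green's function of $\partial_s^2 + 1$ to obtain the Volterra equation
\begin{equation*}
v(s) = \delta\,\sin s - \int_0^s \sin(s-s')\,\chi(x(s'))\bigl[\eta_{WKB}^\pm(s') + v(s')\bigr]\,ds'.
\end{equation*}
On the effective range $s \in [0, s(L)]$ one has $s(L) \le L\sqrt{k^2 + \|w\|_\infty} = O(k)$, while $\|\chi\|_\infty = O(k^{-4})$ (uniform in $x$, assuming $w \in C^2$). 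The Volterra kernel therefore has $L^\infty$-operator norm $s(L)\|\chi\|_\infty = O(k^{-3})$, which is $<1/2$ for $k$ large, so a Neumann-series (equivalently, Gronwall) argument gives
\begin{equation*}
\sup_{s}|v(s)| \le 2\bigl(|\delta| + s(L)\|\chi\|_\infty\sup|\eta_{WKB}^\pm|\bigr) = O(k^{-3/2}),
\end{equation*}
where the forcing contribution $s(L)\|\chi\|_\infty \sqrt{p(0)} = O(k)\cdot O(k^{-4})\cdot O(k^{1/2}) = O(k^{-5/2})$ is dominated by $|\delta|$. Differentiating the Volterra equation in $s$ and repeating the estimate yields $\sup_s |v'(s)| = O(k^{-3/2})$ as well.

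Finally I would transfer the bounds back to $\psi$. Since $\psi^\pm - \psi^\pm_{WKB} = p(x)^{-1/2}v(s(x))$, one obtains $|\psi^\pm(x) - \psi^\pm_{WKB}(x)| \le p(x)^{-1/2}\sup|v| = O(k^{-1/2})\cdot O(k^{-3/2}) = O(k^{-2})$, the first claim. For the derivative, the identity $\psi' = -\frac{1}{2}p^{-3/2}p'\eta + p^{1/2}\eta'$ applied to both $\psi^\pm$ and $\psi_{WKB}^\pm$ yields
\begin{equation*}
(\psi^\pm)'(x) - (\psi_{WKB}^\pm)'(x) = -\frac{1}{2}p(x)^{-3/2}p'(x)\,v(s(x)) + p(x)^{1/2}\,v'(s(x)),
\end{equation*}
whose dominant term is $p^{1/2}v' = O(k^{1/2})\cdot O(k^{-3/2}) = O(k^{-1})$; the first term is $O(k^{-3/2})\cdot O(k^{-1})\cdot O(k^{-3/2}) = O(k^{-4})$ and negligible.

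The main delicate point is the interplay between the smallness $\chi = O(k^{-4})$ and the large $s$-range $s(L) = O(k)$: one must verify that the interaction factor $s(L)^2\|\chi\|_\infty = O(k^{-2})$ is small enough to close the iteration, and track carefully the inflation $\sqrt{p(0)} = O(k^{1/2})$ carried by $\eta_{WKB}^\pm$ against the deflation $p(x)^{-1/2} = O(k^{-1/2})$ recovered when transferring back to $\psi$. The argument also implicitly requires $w \in C^2$ (since $\chi$ contains $w''$), which is slightly stronger than the continuity assumed in Section~\ref{section:model}.
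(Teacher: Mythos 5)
Your proof is correct, and it is both more careful and structurally different from the one in the paper. The paper's proof stays entirely inside the iteration scheme $\eta=\sum_j\eta_j$: it argues heuristically that $\chi\sim k^{-4}$ while the $s$-range is $O(k)$, so each double integration costs $O(k^{-4})\cdot O(k)\cdot O(k)=O(k^{-2})$, whence $\eta_j\sim k^{-2j}$, and then asserts $\psi^{\pm}-\psi^{\pm}_{WKB}=\sum_{j\geq1}\eta_j$. You instead collapse the whole tail into a single Volterra equation for $v=\eta^{\pm}-\eta^{\pm}_{WKB}$ and close it with a Neumann-series/Gronwall bound using $s(L)\|\chi\|_{\infty}=O(k^{-3})<1/2$; this replaces the term-by-term asymptotics (and the unaddressed convergence of the series with uniform constants) by one clean contraction estimate. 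More importantly, you repair a genuine gap that the paper's identity $\psi^{\pm}-\psi^{\pm}_{WKB}=\sum_{j\geq1}\eta_j$ glosses over: the full iterated series solves the ODE with initial derivative $\mp ip(0)-\tfrac{p'(0)}{2p(0)}=\mp ik+O(k^{-1})$, not the prescribed $\mp ik$, so the exact $\psi^{\pm}$ of the lemma is \emph{not} literally the full series. Your $\delta=O(k^{-3/2})$ quantifies exactly this mismatch and shows it is in fact a dominant contribution to the $O(k^{-2})$ error, so it cannot be ignored. Your bookkeeping of the $\sqrt{p(0)}=O(k^{1/2})$ inflation against the $p(x)^{-1/2}=O(k^{-1/2})$ deflation is right, and your observation that the argument needs $w\in C^{2}$ (since $\chi$ contains $w''$) is a fair point about an implicit hypothesis of this section of the paper. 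The only cosmetic caveat is that when you differentiate the Volterra equation you should note that the boundary term at $s'=s$ vanishes because $\sin(0)=0$, which you implicitly use; with that, the bound $\sup_s|v'|=O(k^{-3/2})$ and hence $|(\psi^{\pm})'-(\psi^{\pm}_{WKB})'|=O(k^{-1})$ follows as you state.
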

\begin{proof}
 For $k$ large, $p(x) \sim k$ so $s(x) \sim k$ and $\chi(x) \sim k^{-4}$. The iteration then gives $\eta_j(x) \sim k^{-2j}$ and
$(\eta_j)'(x) \sim k^{-2j+1}$ so the  results follow from
\begin{eqnarray}
 \psi^{\pm}(x) -\psi_{WKB}^{\pm}(x)=  \sum_{j=1}\eta_j(x)
\end{eqnarray}
\end{proof}

On top of approximating the exact solution by the WKB-solution, the derivatives of the WKB-solution will be approximated as follows.
\begin{eqnarray}
 (\psi_{WKB}^{\pm})'(x) = \mp i k\psi_{WKB}^{\pm}(x)+O(k^{-1})
\end{eqnarray}
Thus
\begin{eqnarray}
 \psi_{WKB}^{\pm}(0)=1 &&\qquad\qquad\qquad (\psi_{WKB}^{\pm})'(0) = \mp ik + O(k^{-1})
\end{eqnarray}
Both layers of approximation have an error term that is two orders smaller than the main term.

The transition matrix for the WKB-approximation can now be computed explicitly.
\begin{eqnarray}
\label{WKB_transition}
 &t(k;w)\nonumber \\
 = & t^{WKB}(k;w) +O(k^{-2}) \nonumber \\
 = &  \left(\begin{array}{cc} (\psi^+_{WKB}(L))^{-1} & 0 \\ 0 & (\psi^+_{WKB}(L))^{-1} \end{array}\right)  +O(k^{-2}) \nonumber \\
 = &  \left(\begin{array}{cc} e^{i\int_0^Lp(x)dx} & 0 \\ 0 & e^{i\int_0^Lp(x)dx} \end{array}\right) +O(k^{-2})
\end{eqnarray}

This information will now be used to study the behaviour of the trace formula in the high energy limit. Notice that the function $s$ defined
in (\ref{s-definition}), now seen as a function of $k$, corresponds to the classical action along an edge of the graph:
\begin{equation}
s_{d}(k)=s_{\hat d}(k) := \int_{0}^{L_d}\sqrt{k^2-w_d(x)}dx
\end{equation}
Recall that $\Theta_w(k) = \frac{1}{i} \ln \det T(k;w)$,  which can be approximated to $O(k^{-2})$ by
\begin{eqnarray}
\Theta_w(k)
=  \Theta_w^{WKB}(k)+O(k^{-2})
=  \sum_{d\in \mathcal{D}}s_{d}(k) +O(k^{-2})
\end{eqnarray}

Note that the trace formula as stated in equation (\ref{exact_trace}) contains the derivative of the $\Theta_w(k)$, which in the WKB approximation reads
\begin{eqnarray}
\frac {\partial\ }{\partial k} \Theta_w^{WKB}(k) = \sum_{d\in \mathcal{D}}\int_{0}^{L_d}\frac{k dx}{\sqrt{k^2-w_d(x)}} + \frac{\partial}{\partial k} O(k^{-2}).
\end{eqnarray}
Each of the integrals above can be identified as the time it takes a particle to traverse the edge (in the system of units used here, ``time" is expressed in units of length). In the physics literature, the derivative of $\Theta_w$ is known as the Wigner delay time which will be denoted here by $\mathcal{T}(k)$. The result above can be interpreted as the Wigner delay time for a graph. In the high energy limit, it approaches the total length of the graph.

An important consequence of the WKB approximation to the edge transition matrix (\ref{WKB_transition}) is that reflections by the edge potential can be neglected to order $\frac{1}{k^2}$ . This fact allows us to neglect the contribution to the trace formula from all the periodic orbits in the graph that involve a reflection on an edge.   In other words, only the periodic orbits from the original graph $\mathcal{G}$ survive. Their weights  $\mathcal{A}_p(k)$ split into a vertex scattering part and the contribution from the edges.
\begin{eqnarray}
\mathcal{A}_p(k)
 =\mathcal{A}_p^{WKB}(k) + \frac{\partial}{\partial k} O(k^{-2}) \ .
 \end{eqnarray}
 and
 \begin{eqnarray}
 \mathcal{A}_p^{WKB}(k)  =  \frac{\partial}{\partial k} \left( \frac{n_{\tilde{p}}}{n_p}Im \prod_{v \in p}\sigma_{dd'}^v\prod_{d \in p} e^{i\int_0^{L_d}\sqrt{k^2-w_d(x)}dx}\ .
 \right)
\end{eqnarray}
Here the $\sigma_{dd'}^v= -\delta_{bb'}+\frac{2}{\deg(v)}$ are from the Kirchhoff-Neumann boundary conditions as before. To cast the trace formula in a
form similar to the semi-classical trace formula  \cite{Gutzwiller90}, it is convenient to define the following quantities for each periodic orbit of topological length $n_p$ which consists of  $r_p$ repetitions of a primitive periodic orbit of length $n_{\tilde p}$:

\noindent Let $A_{\tilde p} = |\prod_{v \in {\tilde p}}\sigma_{dd'}^v| $ and let $\nu_{\tilde p}$ be the number of back scatters (where $\sigma_{dd'}^v  <0$). Let $ S_{\tilde p}(k) = \sum_{d\in {\tilde p}} s_d(k)$ which is the classical action along the periodic orbit, and let $ T_{\tilde p}(k)= \frac{\partial }{\partial k}  S_{\tilde p}(k)$ denote the classical period. Then,
\begin{eqnarray}
 \mathcal{A}_p^{WKB}(k)  =  T_{\tilde p}(k)A_{\tilde p}^{r_p} \cos((S_{\tilde p}(k)+\pi \nu_{\tilde p})r_p)\ .
 \end{eqnarray}

 The analogy with the Gutzwiller  trace formula is completed when  $A_{\tilde p}^2$ is identified as the classical probability to survive along the period orbit, and $\nu_{\tilde p}$ the index replacing the Maslov index.  Again to zeroth order this
simplifies to the no potential situation.

Finally,
\begin{eqnarray}
 \sum_{k_n >K}^{\infty}\varphi(k_n)
&=  \frac{1}{2\pi }\int_{K}^{\infty}\varphi(k) \mathcal{T}(k) dk \nonumber \\
 &+\frac{1}{\pi }\int_{K}^{\infty}  \varphi(k) \sum_{p \in PO} T_{\tilde p}(k) A_{\tilde p}^{r_p} \cos((S_{\tilde p}(k)+\pi \nu_{\tilde p})r_p)dk \nonumber \\
 &+   \int_K^{\infty}\varphi(k)\frac{\partial }{\partial k}E(k)dk
 \end{eqnarray}
 where $E(k)$ is the error term, it decays as $E(k)=O(k^{-2})$.

\begin{remark}
Note that the term involving the Euler
characteristic from the classic trace formula is not visible because the formula here is only valid for large values of $k$ and this term
would appear as a $\delta$ distribution located at $k=0$.
\end{remark}

\section*{Acknowledgements}

\noindent This work was supported by the Einstein (Minerva) Center at the Weizmann Institute and
the Wales Institute of Mathematical and Computational Sciences)
(WIMCS). Grants from EPSRC (grant EP/G021287), BSF (grant 2006065) and ISF (grant 166/09) are acknowledged.
We would also like to thank Dr Rami Band for a critical reading of the manuscript and many helpful comments.

\section*{References}
\bibliographystyle{amsalpha}
\addcontentsline{toc}{section}{Literature}
\bibliography{literatur}

\end{document}